\newtheorem{Theorem}{Theorem}[section]
\title{Local Clustering in Hypergraphs through Higher-Order Motifs }
\author{
Giuseppe F. Italiano\thanks{Funded by the Italian Ministry of University and Reseach under PRIN Project n. 2022TS4Y3N - EXPAND: scalable algorithms for EXPloratory Analyses of heterogeneous and dynamic Networked Data.}\\
LUISS University, Italy\\
\texttt{gitaliano@luiss.it}
\and
Athanasios L. Konstantinidis\\
University of Ioannina, Greece\\
\texttt{a.konstantinidis@uoi.gr}
\and
Anna Mpanti\thanks{Funded by the European Union - Next Generation EU, Mission 4 Component 2 CUP: I83C22000990001.}\\
LUISS University, Italy\\
\texttt{ampanti@luiss.it}
\and
Fariba Ranjbar\thanks{Supported by PRIN 2022 project "Next Generation Alogorithms for Constrained Graph Visualization", 2022MI9Z78,  MIUR, (Italian Ministry of Education and Research)}\\
LUISS University, Italy\\
\texttt{fariba.ranjbar@luiss.it}
}
\date{}
\begin{document}

\maketitle

\begin{abstract}

Hypergraphs provide a powerful framework for modeling complex systems and networks with higher-order interactions beyond simple pairwise relationships. However, graph-based clustering approaches, which focus primarily on pairwise relations, fail to represent higher-order interactions, often resulting in low-quality clustering outcomes. In this work, we introduce a novel approach for local clustering in hypergraphs based on higher-order motifs, small connected subgraphs in which nodes may be linked by interactions of any order, extending motif-based techniques previously applied to standard graphs. Our method exploits hypergraph-specific higher-order motifs to better characterize local structures and optimize motif conductance. We propose two alternative strategies for identifying local clusters around a seed hyperedge: a core-based method utilizing hypergraph core decomposition and a BFS-based method based on breadth-first exploration. We construct an auxiliary hypergraph to facilitate efficient partitioning and introduce a framework for local motif-based clustering.  Extensive experiments on real-world datasets demonstrate the effectiveness of our framework and provide a comparative analysis of the two proposed clustering strategies in terms of clustering quality and computational efficiency.

\end{abstract}

\section{Introduction}



Graphs are widely employed in many areas, including data mining and network analysis, for modeling complex pairwise relationships. While traditional graphs effectively represent pairwise interactions, they fall short when it comes to modeling higher-order relationships often found in real-world data.
To address this, \emph{hypergraphs} extend the capabilities of graphs where each hyperedge can connect multiple nodes simultaneously, thereby capturing complex multi-way interactions that conventional graphs cannot and making them well-suited for domains where higher-order information is essential. Numerous important real-world systems, including social networks and distributed databases, can be effectively represented using hypergraph models.

Network-based models have become indispensable tools for analyzing complex systems across diverse domains. These models, particularly in the form of graphs, offer a rich suite of analytical techniques that help uncover complex patterns of connectivity, both at local and global scales. One of the central tasks in understanding the structure and behavior of such networks is \emph{community detection} and \emph{local clustering}~\cite{GiNe}. The study of local clustering, which aims to identify a tightly connected community around a particular seed node, has been extensively researched in the context of standard graphs. However, extending this task to hypergraphs appears to be non-trivial and it is still quite an underexplored area. Existing methods either rely on transforming hypergraphs into pairwise graph structures or directly apply graph-based algorithms which often fail to fully leverage the high-order structural information inherent in hypergraphs, resulting in inaccurate assessments of cluster quality.

This research introduces a novel approach to local clustering in hypergraphs by developing a new measure tailored to their unique structural properties. Our model leverages higher-order motifs, which are small, connected subgraphs where vertices are linked through interactions among multiple nodes, to more accurately evaluate cluster quality. Building upon this measure, we present a local clustering algorithm designed specifically for hypergraphs. Experimental results on real-world datasets demonstrate that our method achieves superior performance both in accuracy and computational efficiency than standard method.

Hypergraph-based local clustering has broad applications across various fields, including bioinformatics~\cite{JiWaWaXi}, disease gene and drug discovery~\cite{SuWaZhCoetal}
, aiding in tasks such as identifying disease genes or protein complexes. 
These applications underscore the importance of developing hypergraph-specific tools that fully exploit their modeling capabilities. Our work addresses this need by providing a theoretically grounded and practically effective framework for local clustering in hypergraphs. Our study focuses on the key foundational concepts that have most directly influenced our approach. In particular, it builds upon developments in graph and hypergraph clustering, higher-order motif analysis, and both local and partitioning-based clustering techniques.

\subsection{Related Work}

Local clustering is a well-studied area both for graphs and hypergraphs, where recent advances focus on motif-based methods that capture higher-order structures beyond traditional edge-based clustering.

\subsubsection{Local Clustering on Graphs} 
Clustering methods include partitioning, hierarchical, density-, grid-, model-, and motif-based approaches~\cite{Faetal}. Local clustering has been extensively studied both theoretically~\cite{AnChLa} and empirically~\cite{LeLaDaMa}, resulting in statistical~\cite{KlGl}, numerical~\cite{LiHeBiHo}, and combinatorial techniques~\cite{ FoLiGlMa}. Recent advances focus on motifs, frequently recurring subgraphs that capture key network structures~\cite{ChFaSc, YiBeLeGl}. Motifs are widely applied in fields like biology~\cite{Alon}, medicine~\cite{ChQuCaZhLiLiLiHeFeJi}, social networks~\cite{HoHaCuPi}, and finance~\cite{SaDiGaSq}, helping characterize network entities and highlighting the value of motif-based clustering for understanding graph structure~\cite{Laetal}.

A central challenge in these approaches is the computational cost of motif detection~\cite{LiHuWaLaHu}. Benson \textit{et al.}~\cite{BeGlLe} introduced an efficient motif-based clustering method, inspiring further optimizations~\cite{MaCaHeChCaLi}. Tsourakakis \textit{et al.}~\cite{TsPaMi} studied triangle motifs and proposed triangle motif conductance for community evaluation. 
Recently, Xia \textit{et al.}~\cite{XiYuLiLiLe} proposed \emph{CHIEF} (Clustering with HIgher-ordEr motiFs), which reduces computational overhead by partitioning networks with $k$-connected graphs before motif clustering.


These studies show that local motif clustering, especially using motif conductance, effectively identifies high-quality communities and reconstructs ground-truth structures. However, as this approach is recent, most methods adapt statistical or numerical edge-based techniques or use simple combinatorial algorithms. Chhabra \textit{et al.}~\cite{ChFaSc} proposed advanced combinatorial methods with two algorithms, one graph-based and one hypergraph-based, that perform local motif clustering via partitioning.


\subsubsection{Local Clustering on Hypergraphs}
Hypergraph clustering has recently gained traction in computer vision and machine learning~\cite{AgLiZePeKrBe} due to its ability to model complex relationships beyond pairwise links~\cite{MaMaBhEpRo}. Since higher-order structures reveal patterns missed by traditional edge-based methods, several approaches~\cite{TsPaMi} have been developed for higher-order and hypergraph clustering. Higher-order motifs fit naturally in hypergraphs, with hyperedges representing motif instances.


In~\cite{BeGlLe} and~\cite{TsPaMi}, hypergraph data is transformed into weighted graphs or motif adjacency matrices, where entries count motifs shared by node pairs, enabling spectral clustering. To enhance scalability in tasks like face and motion clustering, Pukrait \textit{et al.}~\cite{PuChSaSu} introduced a guided sampling method for large hyperedges. 


To address the challenge of local clustering within hypergraphs, a sweep cut method using hypergraph Personalized PageRank (PPR) was introduced~\cite{TaMiIkYo}. The HyperGo algorithm~\cite{LiHeZh18} adapts the NIBBLE local partitioning method~\cite{SpTe} for hypergraphs. The Capacity Releasing Diffusion (CRD) algorithm~\cite{IbGl} offers a non-spectral approach preserving local structures, with its extension HG-CRD operating directly on hyperedges and improving clustering via motif conductance. Recently,~\cite{WeYaLuZhQiZh} proposed a degree-based conductance metric and a greedy algorithm for efficient local clustering in hypergraphs.


\medskip
Although motif-based clustering has been extensively studied in graphs, its direct application to hypergraphs remains limited. Existing hypergraph clustering methods either rely on graph transformations or overlook higher-order structures. Our work bridges this gap by explicitly integrating higher-order motifs within hypergraph clustering algorithms.

\subsection{Our Contribution} 

As previously noted, current methods either adapt graph-based local clustering techniques directly to hypergraphs or reduce hypergraphs to standard graphs, thereby neglecting their intrinsic higher-order structure and often resulting in low-quality clustering. To address this limitation, we introduce a novel approach specifically designed to identify high-quality local clusters within hypergraphs.

In previous studies, higher-order motifs have predominantly been defined as triangles, $k$-connected subgraphs, or small induced patterns known as graphlets. In contrast, our work employs a distinct class of higher-order motifs specifically defined for hypergraphs, as originally introduced in~\cite{LoMuMoBa} (for more details please refer to Section~\ref{HOM}). Analogous to traditional pairwise network motifs, higher-order motifs for hypergraphs hold significant potential for enabling applications across various domains. Overall, our work emphasizes the strength of higher-order motifs in hypergraphs and introduces a framework for local hypergraph clustering via extracting higher-order fingerprints within hypergraphs. 

Building upon this foundation, we extend the framework  developed by  Chhabra \textit{et al}{.}~\cite{ChFaSc} to formulate a local clustering strategy for hypergraphs based on the hypergraph-specific motifs while simultaneously optimizing motif conductance. 
Our main contributions are summarized as follows:

\begin{itemize}
\item Building on the concept of higher-order motifs of hypergraphs, we design a novel local clustering algorithm for hypergraphs, introducing a practical framework that avoids reducing hypergraphs to graphs, thereby maintaining their native multi-way interaction structure.
\item We propose two alternative strategies for identifying local clusters around a given seed hyperedge:
\begin{enumerate}
\item \textbf{Core-based Method:} We identify a localized set of nodes (ball $B$) around the seed using hypergraph core decomposition, followed by motif enumeration and construction of an auxiliary hypergraph $H_{\mu}$. We then perform partitioning to extract the target local cluster.
\item \textbf{BFS-based Method:} As an alternative, we use a breadth-first search to define $B$, while the remaining steps of the process remain unchanged.
\end{enumerate}
\item We conduct an extensive experimental evaluation on real-world hypergraph datasets, providing a significant comparison of core-based versus BFS-based local clustering strategies for hypergraphs based on higher-order motifs.
\end{itemize}

Our experiments showcase the practical value of the proposed framework, highlighting its ability to identify high-quality local clusters that preserve higher-order connectivity patterns in real-world hypergraph datasets.




\section{Preliminaries}
Let $H= (V, E)$ be an undirected hypergraph, where $V$ is the set of nodes, $E$ is the set of hyperedges (or nets), and each hyperedge (net) $e\in E$ is a subset of $V$.
Without loss of generality, we assume $V= \{0, . . . , n-1\}$. We also assume that there are no multiple or self hyperedges. Let $n = |V|$ and $m= |E|$. Let $c : V \to R$ be a node-weight function, and let $\omega: E \to R$ be an edge-weight function. We generalize the functions $c$ and $\omega$ to sets such that $c(V')=\sum_{v\in V'} c(v)$ and $\omega(E')=\sum_{e\in E'} \omega(e)$. For any hypergraph $H=(V,E)$, two nodes $v$ and $u$ are said to be adjacent if there exists a hyperedge $e \in E$ that contains both $v$ and $u$. Otherwise, we say that $v$ and $u$ are not adjacent. For any node $v$ in a hypergraph $H=(V,E)$, the set $N[v] = \{u \in V : u \text{ is adjacent to } v\}\cup \{v\}$ is called the closed neighborhood of $v$ in $H$ and each node in the set $N[v] \setminus \{v\}$ is called a neighbor of $v$. 
The open neighborhood of the node $v$ is the set $N[v] \setminus \{v\}$. We generalize the notations 
of neighborhood to sets, such that if $S  \subseteq V$ then $N(S) = \bigcup_{v\in S}N(v)$ and $N[S] = N(S) \cup S$.
Let $H =(V,E )$ be a hypergraph, and let $V'\subseteq V$.
The  subhypergraph $H[V']$ induced by $V'$   is the hypergraph $H[V'] = (V',E')$ where $E'= \{e \in E : e \subseteq V'\}$. 
 $E'$ can be represented as a multi-set. Moreover, according to the remark above, we can add, if needed, the empty set. 

A 1-cycle (or loop) is a hypergraph consisting of one edge which contains a repeated node. A 2-cycle is a hypergraph consisting of two edges which intersect in at least 2 nodes. For $j\geq 3$, a $j$-cycle is a hypergraph with $j$ edges which can be labelled $e_1,e_2, \cdots ,e_j $ such that there exist distinct nodes $v_1,\cdots,v_j$ where $ v_i \in e_i \cap e_{i+1}$ for $i = 1,\cdots,j $(where $e_j+1 \equiv e_1$). A (Berge) path in $H$ consists of a sequence $ v_0,e_1,v_1,e_2,\cdots,e_j,v_j$ where $v_0,v_1,\cdots,v_j$ are distinct nodes, $e_1,\cdots,e_j$ are distinct edges, and $v_{i-1},v_i \in e_i$ for all $i = 1,\cdots,j$. A hypergraph $H = (V, E)$ is connected if for every pair of nodes $u, v \in V$, there exists a (Berge) path connecting $u$ and $v$. 

Let $\overline{V'} = V \setminus V'$ be the complement of a set $V' \subseteq V$ of nodes. 
A higher-order motif can be considered as a connected hypergraph, where the order of a motif refers to the number of nodes involved.
Let $\mu$ be a higher-order motif.
Enumerating the higher-order motifs $\mu$ in a hypergraph $H$ consists of building the collection $M$ of all occurrences of $\mu$ as a subhypergraph of $H$. The combinatorial explosion of higher-order motifs makes their storing and indexing in memory (required steps for counting their occurrences in empirical hypergraphs and evaluating their over- or under-expression) intractable for high orders. 

The degree of a node in a hypergraph is defined as the number of hyperedges including the node. Similarly, the degree of a hyperedge in a hypergraph is defined as the number of nodes included by the hyperedge. Let $ d(v)$ be the degree of node $v$ and $\Delta$ be the maximum degree of $H$. 
Let $d_{\mu}(v)$ be the number of higher-order motifs $\mu \in M$ which contain $v$. We generalize the notations $d(\cdot)$ and $d_{\mu}(\cdot)$ to sets, such that $d(V')=\sum_{v \in V'}d(v)$ and $d_{\mu}(V')=\sum_{v \in V'}d_{\mu}(v)$.



A $k$-way partition of a (hyper)graph $\mathcal{H}$ is a partition of its node set into $k$ blocks $P = \{V_1, \ldots, V_k\}$ such that:
$$\bigcup_{i=1}^k V_i = V, V_i \neq \emptyset \text{ for } 1 \le i \le k, \text{ and } V_i \cap V_j = \emptyset \text{ for } i \neq j.$$


The cut-net which consists of the total weight of the nets crossing blocks,
$$\sum_{e \in E'} w(E'),$$
in which 
$$E' := \{e \in E : \exists i, j \text{ | } e \cap V_i \neq \emptyset, e \cap V_j \neq \emptyset, i \neq j\}.$$

In the local hypergraph clustering problem, a hypergraph $H = (V , E)$ and a seed subset of nodes $ S \subseteq V$ are taken as input and the goal is to detect a well-characterized cluster (or community) $C \subset V$ containing $S$. A high-quality cluster $C$ usually contains nodes that are densely connected to one another and sparsely connected to $\overline{C}$. There are many functions to quantify the quality of a cluster, such as modularity~\cite{Ulrik} and conductance~\cite{Ravi}. Local higher-order motif hypergraph clustering is a generalization of local hypergraph clustering where a higher-order motif $\mu$ is taken as an additional input and the computed cluster optimizes a clustering metric based on $\mu$. To a given higher-order motif $\mu$, recognize a cluster of nodes, $C$, with the following two constraint conditions. (1) Nodes in $C$ should participate in many higher-order motifs with the same structure to $\mu$. (2) The set $ C$ should avoid cutting $\mu$. This occurs when only a subset of the nodes from a higher-order motif are in the set $C$. We find a cluster that minimizes the ratio $\phi_{\mu}(C)$, where in $\phi_{\mu}(C)$ is defined as $\phi_{\mu}(C) = cut_{\mu}(C, \overline{C})/min[d_{\mu}(C),d_{\mu}(\overline{C})]$ where $\overline{C}$ refers to the complement of $C$,  $cut_{\mu}(C, \overline{C})$ represents the number of higher-order motif $\mu$ with at least one node in $C$ and one in $\overline{C}$, and $d_{\mu}(C)$ is the number of higher-order motifs $\mu$ that reside in $C$. $\phi_{\mu}(C)$ refers to the motif conductance of $C$ with respect to $\mu$~\cite{Bretto, XiYuLiLiLe, ChFaSc}.

\section{Local Hypergraph Clustering: Higher-Order Motif-Based}


In this section, we present our local clustering algorithm for hypergraphs, which consists of four main phases.
Given a hypergraph $H=(V,E)$ and a seed hyperedge $e$, the algorithm proceeds as follows. First, a set of nodes called ball $B$ around the seed hyperedge $e$ (i.e., containing the hyperedges $e$), is selected through either using the core-decomposision method or the BFS method. Secondly, the collection $M$ of higher-order motif occurrences is enumerated, where each occurrence contains at least one node in $B$, and we specifically use higher-order motifs of order $3$. Then, a new hypergraph $H_{\mu}$ is constructed to facilitate motif conductance computation. Finally, $H_{\mu}$ is partitioned into two blocks using the  KaHyPar
hypergraph partitioning algorithm from~\cite{ScHeHeMeSaSc}, and the result is mapped back to $H$ as a local clustering around the seed $e$.

Since hypergraph partitioning minimizes the cut-net value rather than optimizing motif conductance directly, the partitioning phase is repeated $\beta$ times with different imbalance constraints, selecting the best result. These parameters, whose values are specified in Section \ref{sec:experimental_setup}, are selected based on experimental evaluation. Additionally, when using BFS method to compute the ball $B$ the first phase is repeated $\alpha$ times to enhance exploration around $e$.

\subsection{Phase One: Locating the Ball}

As mentioned earlier, the first step in our algorithm is to identify a ball $B$ around the seed $e$, using either the core decomposition method or the breadth-first search (BFS) approach. Below, we describe each method in detail.

\subsubsection{Core around the Seed Hyperedge}
In graph theory, the concept of $k$-core decomposition is a tool for analyzing the structure and connectivity of networks. A $k$-core of a graph is a maximal subgraph in which every node has at least $k$ neighbors within the subgraph. This concept has been widely used in various applications, such as community detection, network visualization, and understanding the robustness of networks. However, when moving from graphs to hypergraphs, the definition and computation of cores become more complex due to the presence of hyperedges that can connect multiple nodes simultaneously.

Various definitions and algorithmic approaches have been developed to compute cores in hypergraphs, adapting traditional graph-based methods to the unique structure of hypergraphs while addressing their complexities~\cite{KhBaSrTh, LiZhHuXu,LuZhZhSt,MoPeMi}. One such approach is the neighborhood-based core decomposition, which extends the concept of $k$-core decomposition to hypergraphs by considering the number of neighbors each node has within a subhypergraph. This approach ensures that nodes maintain a minimum level of connectivity, even in the presence of hyperedges that may span multiple nodes~\cite{ArKhRaGh}. In the context of hypergraphs, the nbr-$k$-core $H[V_k]=(V_k, E[V_k])$ of a hypergraph $H=(V,E)$, denoted as $H_k$, is defined as the maximal (strongly) induced subhypergraph where every node $u\in V$ has at least $k$ neighbors in $H[V_k]$. The maximum core is the largest value of $k$ for which $H_k$ is non-empty. The core-number $c(v)$ of a node $v \in V$ is the largest $k$ such that $v\in V_k$  and $v\not\in V_{k+1}$. Thus, the core decomposition of a hypergraph assigns to each node its core number, capturing the nested hierarchy of connectivity in the hypergraph.

The definition of the nbr-$k$-core ensures that nodes maintain a minimum number of neighbors within the subhypergraph. However, computing cores in hypergraphs is more challenging than in ordinary graphs due to the presence of hyperedges that can connect multiple nodes. Specifically, in hypergraphs, the neighborhood of a node is determined not just by pairwise connections (as in graphs) but by its participation in hyperedges, which may involve multiple nodes. Also, the strongly induced subhypergraph condition ensures that only hyperedges fully contained within $V_k$ are considered, preserving the neighborhood structure and maintaining the coreness condition for all nodes in the subhypergraph.

\subsubsection{Ball around Seed Hyperedge}
First, we extend the BFS algorithm to hypergraphs. Unlike standard graphs, where an edge connects only two nodes, a hyperedge can simultaneously link multiple nodes. In hypergraphs, traversal occurs through hyperedges, potentially introducing several new nodes at once. Specifically, the BFS algorithm for hypergraphs begins with a given seed (either a node or a hyperedge i.e., multiple nodes) and explores all reachable nodes level by level. Instead of processing pairwise edges as in conventional graphs, it navigates through hyperedges, each capable of connecting multiple nodes. By iterating over these hyperedges and enqueueing newly discovered nodes, the algorithm ensures that all nodes are visited in the smallest possible number of steps. The complexity of BFS in hypergraphs is influenced by the number of hyperedges $E_H$ and their respective sizes $\vert e\vert =k$ resulting in a time complexity of $O\left(V + \sum_{e \in E_H} |e|\right)$, where $\vert e\vert $ denotes  the number of nodes in a given hyperedge $e$. 

The method proposed in~\cite{ChFaSc} selects $B$ using a breadth-first search (BFS) starting from a seed. Specifically, the BFS tree is expanded up to $l$ layers, and all nodes within these layers are included in $B$. To enhance exploration, the algorithm is repeated $\alpha$ times with varying values of $l$. Two special cases are addressed: if $B$ is too small, at least one repetition ensures that it contains at least $100$ nodes by expanding layers when necessary, as suggested in~\cite{LeLaDaMa}, unless the dataset itself has fewer than 100 nodes. Additionally, if the entire BFS tree consists of at most $l$ layers, the algorithm returns a single set $B$. This approach increases the probability of capturing a well-defined community containing the seed. The BFS selection is performed within the subhypergraph induced in the connected component that contains the seed, specifically within its closed neighborhood $N[B]$. Once $B$ is selected, subsequent computational steps operate only on $B$, its hyperedges, and motif occurrences, significantly reducing the computational complexity. The parameters $\alpha$ (number of repetitions) and $l$ (layers per repetition) can be adjusted as needed.\\

\subsection{Phase Two: Motif Enumeration}
\label{HOM} 
Motif analysis has become a crucial tool in network science for identifying characteristic patterns at the microscale. Higher-order network motifs are defined as small, connected structures of higher-order interactions that occur in a given hypergraph with significantly greater frequency than in a suitably randomized counterpart. Similar to traditional motif analysis, the key steps in higher-order motif analysis include: (i) counting the occurrences of each higher-order motif within the network, (ii) comparing these frequencies to those found in a null model, and (iii) assessing their over- or under-representation using statistical measures. A comprehensive discussion of algorithms and tools used to extract and analyze higher-order motifs is provided in~\cite{LoMuMoBa}. 

\begin{figure}[t!]
\centering
\includegraphics[width=0.85\linewidth]{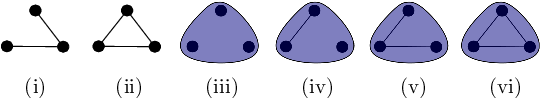}
\caption{All six patterns of higher-order interactions among three nodes. Blue-shaded triangles indicate higher-order interactions, while black lines signify pairwise interactions.}
\label{fig:HOMotif3}
\end{figure}

The analysis of higher-order motifs of order 3 across different domains reveals the relative structural significance of certain interaction patterns (Figure~\ref{fig:HOMotif3}). As studied in~\cite{LoMuMoBa} the pairwise triangle (motif (ii)) is consistently highly overrepresented in all domains. However, the most notable variations across domains arise from motifs that involve a 3-hyperedge combined with at least one dyadic edge. In social and technological networks, motif (vi) comprising a 3-hyperedge and a triangle of dyadic edges is strongly overrepresented, suggesting that entities engaging in group interactions also tend to form individual connections. In co-authorship networks, motifs (iv) and (v) are the most overrepresented; these structures consist of a 3-hyperedge along with one or two dyadic edges. This pattern indicates a possible hierarchical organization, where not all nodes interact equally in pairs. For example, a research leader may co-author papers with students and postdocs, while the latter do not collaborate independently without the leader’s involvement.

We now present the motif-enumeration phase of our approach and discuss its implementation. We efficiently solve this step for higher-order motifs of order 3, based on the size of the subhypergraph induced in $H$ by the closed neighborhood $N[B]$ of $B$. Higher-order motifs of order 3 have numerous important applications in network analysis and clustering as mentioned earlier. Without loss of generality, our focus is specifically on higher-order motifs of order 3. However, the overall approach can be adapted to accommodate more complex higher-order motifs.  

An algorithm for enumerating higher-order motifs of order 3 and 4 was introduced by~\cite{LoMuMoBa}. Broadly speaking, this algorithm follows a hierarchical approach. It begins by iterating over all hyperedges of size $k$ that can directly induce a motif meaning that a hyperedge of size $k$ provides all the necessary nodes to form a motif of order $k$. The process then continues by examining hyperedges of smaller sizes until it reaches traditional dyadic edges. Since hyperedges smaller than $k$ cannot directly induce a motif, the algorithm follows a similar strategy to~\cite{Wer} by selecting the remaining nodes based on the neighborhood of the subhypergraph. Once $k$ nodes have been selected, the algorithm efficiently constructs their induced subhypergraph by iterating over the power set of these $k$ nodes, which results in $2^k$ possible hyperedges. It then retains only those hyperedges that exist in the original hypergraph.

In the higher-order motif-enumeration phase of our algorithm, we utilize the algorithm proposed in~\cite{LoMuMoBa}, but only on the subhypergraph induced in $H$ by $N[B]$. This approach is sufficient to identify all higher-order motifs of order 3 that include at least one node from $B$, as demonstrated by transformation (1) in Figure~\ref{fig:hyperedges}.

\begin{figure}[t!]
\centering
\includegraphics[width=0.99\linewidth]{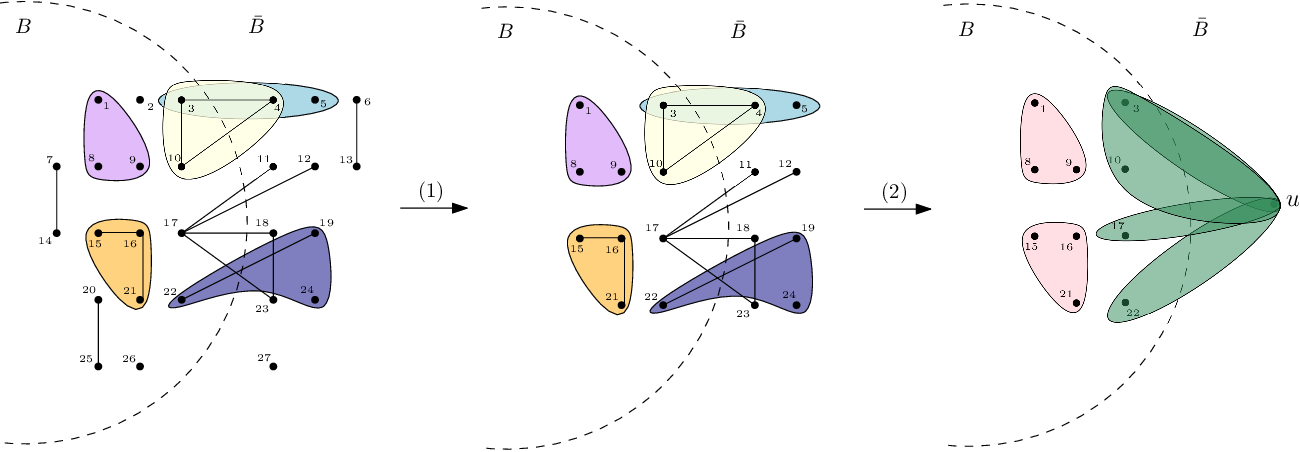}
\caption{An example of the higher-order motif-enumeration and auxiliary hypergraph construction phases of our approach. On the left, the nodes of $H$ are divided into sets $B$ and $\overline{B}$. In the center, higher-order motif occurrences that include at least one node in $B$ are enumerated. On the right,  $H_{\mu}$ is constructed by transforming higher-order motifs into hyperedges and contracting $\overline{B}$ into a single node $u$.
}
\label{fig:hyperedges}
\end{figure}

\subsection{Phase Three: Construction of Auxiliary Hypergraph $H_{\mu}$}

In this section, we provide a conceptual description of an auxiliary hypergraph $H_{\mu}$ that we construct to support the next phase of our algorithm, which is partitioning. We explain how to build this hypergraph using the method proposed in~\cite{ChFaSc}. We demonstrate that $H_{\mu}$ can be constructed in linear time relative to the number of nodes in $B$ and the higher-order motifs in $M$ (the set of higher-order motif occurrences identified in the previous step). Additionally, we discuss the advantages and limitations of this hypergraph-based approach.

Our auxiliary hypergraph $H_{\mu}$ is constructed through two main steps. First, we define a hypergraph with $V$ as the set of nodes and a set $E$ of hyperedges. For each higher-order motif in $M$, $E$ includes a hyperedge whose nodes correspond to the nodes of that motif. Next, we merge all nodes in $\overline{B}$ into a single node $u$, and replace parallel hyperedges with a single hyperedge whose weight is the sum of the weights of the removed parallel hyperedges. Formally, this hypergraph is defined as $H_{\mu} = (B \cup \{u\}, E)$, where the set $E$ consists of a hyperedge $e$ associated with each higher-order motif occurrence $H' = (V', E') \in M$. If $V'\subseteq B$, then $e = V'$; otherwise, $e = V'\cap B \cup \{u\}$. In the first case, the hyperedge has weight 1, while in the second case, the hyperedge’s weight equals the number of higher-order motif occurrences in $M$ it represents. For the partitioning process, we assign equal weights to the nodes, as unbalanced node weights could affect the performance of hypergraph partitioning algorithms, commonly referred to as partitioners. However, the correct weights are used when evaluating the objective. In practice, the construction of $H_{\mu}$ involves instantiating the nodes in $B\cup \{u\}$ and the hyperedges in $E$. Assuming the number of nodes in $\mu$ is constant, the hypergraph is constructed in $O(|S| + |M|)$ time and requires $O(|S| + |M|)$ memory. The construction of $H_{\mu}$ is depicted in transformation (3) of Figure~\ref{fig:steps} and demonstrated with an example in transformation (2) of Figure~\ref{fig:hyperedges}.

By examining the relationship between $H$ and $H_{\mu}$, we can categorize the components into three distinct groups. The first group consists of the nodes in $B$ and the higher-order motifs where all nodes are in $B$, which are represented in $H_{\mu}$ without any contraction, as individual nodes and hyperedges. The second group includes the nodes in $\overline{B}$ and the higher-order motifs where all nodes are in $\overline{B}$, which are compactly represented in $H_{\mu}$ as the contracted node $v$. The third group consists of higher-order motifs that contain nodes from both $B$ and $\overline{B}$, and these higher-order motifs are represented in $H_{\mu}$ as hyperedges that include individual nodes from $B$ along with the node $u$. In summary, this auxiliary hypergraph offers a compact representation of the entire hypergraph $H$, emphasizing relevant information for local higher-order motif clustering in two ways: hyperedges are omitted while higher-order motifs are made explicit, and global information is abstracted while local information is preserved in detail. Theorem~\ref{thm1} demonstrates that the cut-net of a partition of $H_{\mu}$ directly corresponds to the motif-cut of an equivalent partition in $H$, provided our motif enumeration step is accurate. Additionally, Theorem~\ref{thm2} shows that the motif conductance of this equivalent partition of $H$ can be directly computed from $H_{\mu}$, assuming $d_{\mu}(B) \leq d_{\mu}(\overline{B})$. The assumption $d_{\mu}(B) \leq d_{\mu}(\overline{B})$ is reasonable because $B$ is typically smaller than $\overline{B}$. Although enumerating the higher-order motifs in $\overline{B}$ may not be ideal for a local clustering algorithm, we have verified that this assumption holds throughout all our experiments.

\begin{Theorem} \label{thm1}

Any $k$-way partition $P$ of our auxiliary hypergraph $H_\mu$ corresponds to a unique $k$-way partition $P'$ of $H$ where the cut-net of $P$ is identical to the motif-cut of $P'$, provided that the motif enumeration step is exact.
\end{Theorem}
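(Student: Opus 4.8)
The plan is to make the correspondence explicit via ``un-contraction'' of the node $u$, and then to track the fate of each motif occurrence through it. I will assume $\overline{B}\neq\emptyset$ (if $B=V$ the node $u$ lies in no hyperedge of $H_\mu$ and the statement is degenerate). Given a $k$-way partition $P=\{W_1,\dots,W_k\}$ of $H_\mu=(B\cup\{u\},E)$ with $u\in W_j$, I would define $P'=\{V_1,\dots,V_k\}$ of $H$ by $V_i=W_i$ for $i\neq j$ and $V_j=(W_j\setminus\{u\})\cup\overline{B}$. First I would verify that $P'$ is a legitimate $k$-way partition of $V$ (nonemptiness of $V_j$ uses $\overline{B}\neq\emptyset$, the remaining $V_i$ are the nonempty $W_i$; disjointness and covering are immediate), that $P'$ agrees with $P$ on every node of $B$, and that $P\mapsto P'$ is injective, since $P$ is recovered from $P'$ by re-contracting $\overline{B}$ to $u$. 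This yields the asserted unique correspondence, and what remains is to equate the two objective values.

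Before the main computation I would record two consequences of exact motif enumeration: (a) $M$ is exactly the set of motif occurrences of $H$ containing at least one node of $B$; and (b) every motif occurrence of $H$ whose nodes all lie in $\overline{B}$ is uncut by $P'$, since its node set sits inside the single block $V_j$. Together these give that the motif-cut of $P'$ equals $\lvert\{H'\in M:\ H' \text{ is cut by } P'\}\rvert$, so it suffices to show this number equals the cut-net of $P$, i.e. $\sum_{e}w(e)$ over the hyperedges $e$ of $H_\mu$ that cross blocks of $P$.

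The heart of the argument is a case analysis over $H'=(V',E')\in M$ according to how $V'$ meets $B$. If $V'\subseteq B$, then $H'$ contributes the weight-$1$ hyperedge $e=V'$ of $H_\mu$; since $P$ and $P'$ induce the same partition on $B\supseteq V'$, $e$ crosses blocks of $P$ exactly when $H'$ is cut by $P'$, so the two contributions match. If $V'\cap\overline{B}\neq\emptyset$, then $H'$ is one of the parallel hyperedges merged into $e=(V'\cap B)\cup\{u\}$; using $u\in W_j$, $\overline{B}\subseteq V_j$, and $W_j\cap B=V_j\cap B$, I would check that $e$ crosses blocks of $P$ iff $(V'\cap B)\nsubseteq W_j$, which holds iff $(V'\cap B)\nsubseteq V_j$, and — because $\emptyset\neq V'\cap\overline{B}\subseteq V_j$ — this last condition is precisely that $H'$ is cut by $P'$. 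The point I expect to need the most care is the \emph{all-or-nothing} behaviour of merged hyperedges: all occurrences merged into a fixed $e$ share the same $B$-restriction $V'\cap B$, hence the same cut status under $P'$, so the weight $w(e)$ (the count of those occurrences) enters the cut-net of $P$ exactly when every such occurrence is cut by $P'$, and contributes $0$ otherwise. A small bookkeeping remark closes the picture: for a fixed motif pattern (indeed for any motif, since the subhypergraph induced on a node set is unique) each node subset occurs at most once in $M$, so the $V'\subseteq B$ occurrences genuinely carry weight $1$ and are never themselves merged. Summing the matched contributions over all $H'\in M$ then gives that the cut-net of $P$ equals $\lvert\{H'\in M:\ H' \text{ is cut by } P'\}\rvert$, which by the previous paragraph equals the motif-cut of $P'$, as required.
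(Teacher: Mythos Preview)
Your proposal is correct and follows essentially the same approach as the paper: both un-contract $u$ back to $\overline{B}$ to obtain $P'$, observe that motifs wholly inside $\overline{B}$ cannot be cut, and then verify motif-by-motif that a motif occurrence is cut in $P'$ iff its associated hyperedge is cut in $P$. The only cosmetic difference is that the paper first argues with the un-merged parallel hyperedges and tacks on the weighted case at the end, whereas you handle the merged weighted hyperedges directly via your all-or-nothing observation; your write-up is considerably more explicit and careful than the paper's own proof.
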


\begin{proof}
Consider the auxiliary hypergraph $H_\mu$ but now without the replacement of the parallel hyperedges with a single hyperedge whose weight is the sum of the weights of the removed parallel hyperedges.
Observe that by the construction of $H_\mu$, we get a correspondence between the nodes of $H_\mu$ and the nodes of $H$.
Thus, any partition $P$  of $H_\mu$ aligns with a partition $P'$ of $H$, with the corresponding nodes form the same blocks.
Next, there is no motif that is totally contained in $S'$ can be cut in $P'$, since $B'$ is just a node in $H_\mu$. 
On the other hand, every motif between $B$ and $B'$ correspond to a hyperedge in $H_\mu$ with the sense that both of them contain the same nodes in $H$ and $H_\mu$, respectively. 
Therefore, a motif occurrence of $H$ is cut in $P'$ if and only if the corresponding hyperedge in $H_\mu$ is cut in $P$.
Finally, it can be extended even if we consider the weight on the hyperedges instead of the parallel hyperedges.
\end{proof}

\begin{figure}[t!]
\centering
\includegraphics[width=0.99\linewidth]{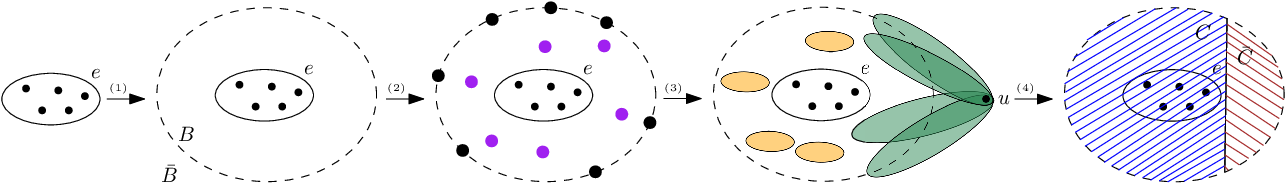}
\caption{Illustration of the steps in our proposed algorithm. (1) Given a seed hyperedge $e$ and a hypergraph $H$, a ball $B$ surrounding $e$ is selected. (2) Higher-order motif occurrences of $\mu$ that include at least one node in $B$ are enumerated. (3) An auxiliary hypergraph $H_{\mu}$ is then constructed by transforming higher-order motifs into hyperedges and contracting $\overline{B}$ into a single node $u$. (4) The auxiliary hypergraph is partitioned into two blocks using a multi-level hypergraph partitioner, and the partition of $H_{\mu}$ that does not contain $u$ is mapped back as a local cluster around the seed in $H$.}
\label{fig:steps}
\end{figure}

\begin{Theorem} \label{thm2}
Given a 2-way partition $P = (C,\overline{C})$ of our auxiliary hypergraph $H_\mu$ with $ u \in \overline{C}$, the motif conductance $\phi_{\mu}(C')$ of the corresponding 2-way partition $P' = (C',\overline{C'})$ of $H$ is defined as the ratio of the cut-net of $P$ to $\lceil \mathcal{W}(C)$, assuming an exact motif enumeration step and $d_\mu(S) \leq d_\mu(\overline{S})$.
\end{Theorem}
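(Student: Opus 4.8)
The plan is to reduce the motif conductance $\phi_{\mu}(C')$ to a ratio whose numerator is supplied by Theorem~\ref{thm1} and whose denominator is exactly $\mathcal{W}(C)$, the total weighted degree in $H_{\mu}$ of the nodes of the block $C$. The first step is to exploit the hypothesis $u\in\overline{C}$. By construction the node $u$ of $H_{\mu}$ is the contraction of all of $\overline{B}$, so $u\notin C$ forces $C\subseteq B$; and by the node correspondence established in the proof of Theorem~\ref{thm1} the block $C'$ of $P'$ is literally this same set of nodes of $H$. Hence $C'\subseteq B$ and $\overline{B}\subseteq\overline{C'}$. Since $d_{\mu}(\cdot)$ is a sum of non-negative terms, it is monotone under set inclusion, so $d_{\mu}(C')\le d_{\mu}(B)$ and $d_{\mu}(\overline{B})\le d_{\mu}(\overline{C'})$; chaining these inequalities with the hypothesis $d_{\mu}(B)\le d_{\mu}(\overline{B})$ (the assumption $d_{\mu}(S)\le d_{\mu}(\overline{S})$ of the theorem, with $S=B$) gives $d_{\mu}(C')\le d_{\mu}(\overline{C'})$. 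Consequently $\min\{d_{\mu}(C'),d_{\mu}(\overline{C'})\}=d_{\mu}(C')$, and so $\phi_{\mu}(C')=cut_{\mu}(C',\overline{C'})/d_{\mu}(C')$.

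Next I would handle the two pieces of this ratio separately. For the numerator, Theorem~\ref{thm1} applied with $k=2$ already gives that, under an exact enumeration step, $cut_{\mu}(C',\overline{C'})$ equals the cut-net of $P$ in $H_{\mu}$; here one should note that restricting the enumeration to the subhypergraph induced by $N[B]$ loses nothing relevant, because every motif cut by $P'$ has a node in $C'\subseteq B$ and is therefore among the enumerated occurrences, while a motif lying entirely in $\overline{B}$ is contained in $\overline{C'}$ and is not cut. For the denominator, I would prove that $d_{\mu}(C')=\mathcal{W}(C)$. The heart of this is the claim that for every node $v\in B$ the total weight of the hyperedges of $H_{\mu}$ incident to $v$ equals $d_{\mu}(v)$: each occurrence $H'=(V',E')\in M$ with $V'\subseteq B$ and $v\in V'$ contributes a distinct weight-$1$ hyperedge through $v$, whereas all occurrences through $v$ that meet $\overline{B}$ and share the same trace $V'\cap B$ are merged into a single hyperedge whose weight is exactly their number; adding the two contributions recovers the count $d_{\mu}(v)$ of all occurrences containing $v$. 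Summing over $v\in C$ (every such $v$ lying in $B$) and observing that occurrences disjoint from $B$ never contribute to $d_{\mu}(C')$ yields $d_{\mu}(C')=\mathcal{W}(C)$.

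Putting the two steps together shows that $\phi_{\mu}(C')$ equals the cut-net of $P$ divided by $\mathcal{W}(C)$, the identity asserted by the theorem. I expect the denominator step to be the main obstacle: one has to track carefully how the merging of parallel hyperedges redistributes node--motif incidences so that weighted degrees in $H_{\mu}$ still match $d_{\mu}$, and one has to be sure the $N[B]$-restricted enumeration captures all occurrences touching $C'$. The other delicate point is justifying that the minimum in the definition of motif conductance is attained at $C'$; this is precisely where monotonicity of $d_{\mu}$ together with the assumption $d_{\mu}(B)\le d_{\mu}(\overline{B})$ is essential — without it the clean single-ratio formula would have to retain the $\min$ and would not reduce to a cut-net/volume quotient.
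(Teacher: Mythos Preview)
Your proposal is correct and follows essentially the same route as the paper's own proof: invoke Theorem~\ref{thm1} for the numerator, use $u\in\overline{C}$ to get $C'\subseteq B$, chain the monotonicity inequalities with the hypothesis $d_{\mu}(B)\le d_{\mu}(\overline{B})$ to pin the $\min$ in the denominator at $d_{\mu}(C')$, and then identify $d_{\mu}(C')$ with $\mathcal{W}(C)$ from the construction of $H_{\mu}$. If anything you are more careful than the paper, which simply asserts ``by the construction of $H_{\mu}$, the values of $d_{\mu}(C')$ and $\mathcal{W}(C)$ are the same'' without the per-node weighted-incidence bookkeeping you spell out.
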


\begin{proof}
From Theorem~\ref{thm1}, the motif-cut of $P'$ can be substituted by the cut-net of $P$ in the numerator of the definition of $\phi_{\mu}(C')$. 
Thus, we need only to show that the denominator of $\phi_{mu}(C')$, that is $\min[d_{\mu}(C),d_{\mu}(\overline{C})]$, is equal to  $\lceil \mathcal{W}(C)$.
By the construction of $H_\mu$, the values of $d_{\mu}(C')$ and $\lceil \mathcal{W}(C)$ are the same.
Moreover,  $ u \in \overline{C}$ says that $S \subseteq C'$ and $C'\subseteq S$, which means that $d_{\mu}(\overline{S}) \leq d_{\mu}(\overline{C'})$ and $d_{\mu}(C') \leq d_{\mu}(S)$.
Thus, $\lceil \mathcal{W}(C) = d_{\mu}(\overline{C'}) \leq  d_{\mu}(S) \leq d_{\mu}(\overline{S}) \leq d_{\mu}(\overline{C'})$, due to $d_{\mu}(S) \leq d_{\mu}(\overline{S})$.
\end{proof} 

\subsection{Phase Four: Partitioning}

This section outlines the hypergraph partitioning phase of our local higher-order motif clustering algorithm. We describe the hypergraph partitioning algorithms employed and elaborate on the strategies used to ensure the feasibility of the obtained solution while enhancing its quality. Additionally, we provide insights into the computational complexity of the partitioning phase. Our algorithm's partitioning phase involves a two-way partitioning of $H_\mu$, which is performed using the multi-level hypergraph partitioner KaHyPar~\cite{ScHeHeMeSaSc}.

This partitioner employs advanced algorithms to efficiently generate low-cut partitions of hypergraphs. As previously established, any two-way partition of $H_\mu$ inherently corresponds to a community in $H$. However, our objective is to derive a consistent partition of $H_\mu$, which we define as a partition where the seed and the contracted node $u$ reside in separate blocks. This consistency ensures that the resulting partition corresponds to a local community in $H$ that includes the seed and remains entirely contained within the ball $B$. Maintaining this criterion is crucial, as the nodes in $\overline{B}$ have not been explored by our algorithm and are located farther from the seed than those in $B$. Notably, the two-way partition illustrated in Figure~\ref{fig:steps} produces a consistent local community according to our definition.  

Additionally, KaHyPar supports hypergraph partitioning with fixed nodes. To ensure block feasibility within our algorithm, we enforce that the seed is assigned to the block that excludes $u$ after partitioning is completed (prior to computing motif conductance). Despite its simplicity, this approach has shown minimal impact on solution quality in preliminary evaluations compared to a fixed-node-based method. Apart from ensuring a consistent local clustering in the original hypergraph, our solution aims to achieve minimal motif conductance. However, since KaHyPar is a randomized algorithm that primarily minimizes the cut value while enforcing a strict balancing constraint, it does not directly optimize motif conductance. To enhance our results, we explore various combinations of cut-net and imbalance by executing the partitioning process $\beta$ times with randomized balancing constraints for each constructed auxiliary hypergraph, where $\beta$ is a tuning parameter. For each partition obtained, our algorithm evaluates the motif conductance of the corresponding local cluster, as outlined in Theorem~\ref{thm2}, and retains the partition with the lowest motif conductance.

KaHyPar exhibits near-linear runtime in practical scenarios. Although it employs advanced algorithms and data structures to minimize the cut value introducing higher constant factors in its runtime complexity, the motif enumeration phase can become the dominant computational bottleneck in our local clustering algorithm, particularly for higher-order motifs of order greater than three. Additionally, KaHyPar may experience performance slowdowns as the number of hyperedges in the auxiliary hypergraph grows significantly. In such cases, runtime efficiency can be further enhanced by leveraging parallelized tools like Mt-KaHyPar~\cite{GoHeSaSch}. However, parallelization falls outside the scope of this study.

\begin{table}[t!]
    \centering
    \begin{tabular}{l|c|c}
        \hline
        \textbf{Dataset} & \textbf{$|V|$} & \textbf{$|E|$} \\
        \hline
        DBLP & 1924991 & 2466799 \\
        coauth-MAG-History & 1014734 & 895439 \\
        Enron & 4423 & 5734 \\
        Email\_Eu & 998 & 25027 \\
        Contact-primary-school & 242 & 12704 \\ 
        Conference & 403 & 10541 \\ 
        Contact-high-School & 327 & 7818 \\
        Hospital & 75 & 1825 \\     
        \hline
    \end{tabular}
    \caption{Size of hypergraphs used in the experiments.}
    \label{tab:dataset_stats}
\end{table}

\section{Experimental Evaluation}
We implemented our algorithm in Python using the KaHyPar framework which was implemented originally in C++, leveraging the public libraries provided by KaHyPar~\cite{ScHeHeMeSaSc} for hypergraph partitioning.

\subsection{Experimental Setup} \label{sec:experimental_setup}
Through the evaluation of our proposed model local partitioning on hypergraphs, we employed 9 diverse datasets from different domains and application areas to examine the efficiency of our approach in terms of computational performance, scalability, and accuracy across real-world scenarios. For our evaluation of real-world higher-order systems, we collected a number of freely available networked datasets. The datasets come from a variety of domains: sociology (proximity contacts), technology (e-mails), biology (gene/disease, drugs) and co-authorship. The description of
each dataset is reported in Supplementary Note 1. Additionally, it is of major importance to mention that the data-set is the one provided by https://www.cs.cornell.edu/~arb/data/. We implemented our proposed model in Python and performed the experiments on a commodity computer MacBook Pro Apple M3 with 16Gb of RAM.

In Table~\ref{tab:performance}, we compare the two alternative approaches for identifying local clusters core-based and BFS-based methods. More specifically, we use the following parameters for our algorithm: $\alpha = 3$, $\beta = 80$. In the current implementation, the appropriate ball size of $S$, was selected by identifying the first breadth-first search (BFS) layer that resulted in more than 100 nodes, as previously described. Subsequently, two additional consecutive BFS layers were computed. Among the three resulting BFS balls constructed through this process, the one achieving the optimal motif conductance score was selected as the final output.

\subsection{Analysis of the Methods}


The core-based method is based on $k$-core decomposition, ensuring that selected nodes maintain a minimum level of connectivity, which leads to structurally cohesive clusters. 
While, the BFS-based method follows a traversal strategy that expands clusters based on neighborhood exploration, which may include weaker connections.
Table~\ref{tab:performance} presents a comparative analysis of the two methods across a range of real-world datasets. A key metric examined is motif conductance ($\phi_\mu$), which evaluates how well motifs are preserved within the resulting clusters. Across most datasets, the core-based method consistently achieves lower motif conductance values (e.g., 0.011 with Core ball vs. 0.029 with BFS ball in DBLP graph), showing that this method tends to preserve high-order motif participation within the cluster.

Another interesting observation is the stability of conductance values in the core-based method. In dataset of Contact-high-School graph, both methods produce clusters of comparable size and conductance, but the core-based approach achieves this with less computational time (21s vs. 66s), showing that core decomposition can sometimes provide not only better but also more efficient results.
Regarding cluster size, the results reveal no consistent dominance of one method over the other. 
In some datasets like DBLP and coauth-MAG-History, the core-based method produces larger clusters, likely because the area around the seed is densely connected and supports wider inclusion. On the other hand, in datasets such as Email-Eu and Enron, the BFS-based method results in larger clusters, probably because it includes more loosely connected nodes that would not be part of a core.

Overall, these results support the conclusion that while both methods can be effective depending on the graph structure, the core-based method offers a stronger guarantee of higher-order motif structure, especially in dense or highly structured hypergraphs. Its ability to preserve high-order connectivity and guide more meaningful motif-based partitioning leads to more structurally connected and internally consistent clusters.

\begin{table}[t!]
\centering
\label{tab:performance}
\begin{tabular}{lcccccc}
\hline
\multirow{2}{*}{Graph} & \multicolumn{3}{c}{Core ball} & \multicolumn{3}{c}{BFS ball} \\
\cline{2-4} \cline{5-7}
 & $\phi_{\mu}$ & $|C|$ & t(s) & $\phi_{\mu}$ & $|C|$ & t(s) \\
\hline
DBLP & 0.011 & 1397 & 207 & 0.029 & 375 & 183 \\
Contact-primary-school & 0.622 & 64 & 154 & 0.669 & 95 & 489 \\
Enron & 0.161 & 154 & 20 & 0.200 & 584 & 30 \\
Contact-high-School & 0.090 & 129 & 21 & 0.090 & 112 & 66 \\
Conference & 0.701 & 95 & 139 & 0.895 & 155 & 297 \\
Hospital & 0.682 & 21 & 15 & 0.906 & 37 & 27 \\
coauth-MAG-History & 0.134 & 334 & 30 & 0.180 & 182 & 21 \\
Email-Eu & 0.623 & 85 & 237 & 0.846 & 269 & 897 \\
\hline
\textbf{Overall} & 0.378 & 285 & -- & 0.477 & 226 & -- \\
\hline
\end{tabular}
\caption{Performance of Core ball and BFS ball on various datasets. The running time for BFS ball includes the full $\alpha=3$ repetition process.}
\label{tab:performance}
\end{table}

\section{Conclusion}
In summary, we introduced a novel approach to local hypergraph clustering by leveraging higher-order motifs.
Unlike traditional methods that often rely on pairwise relationships, our framework integrates specific structural properties from hypergraphs to optimize motif conductance and to improve clustering quality. Our methodology extends previous motif-based approaches by introducing core-based and BFS-based clustering strategies adapted to hypergraphs.
Our experiments on real-world datasets from various fields show that our approach enhances clustering accuracy and computational efficiency.

A promising direction for future research is the investigation of motif enumeration for higher-order structures beyond order 3.
While this extension offers deeper insights into complex network interactions, it also presents significant computational and theoretical challenges. 
Addressing the combinatorial explosion inherent in enumerating higher-order motifs requires the development of novel techniques that improve efficiency and scalability.
In this regard, parallel and distributed computing approaches could be helpful to optimize motif extraction, enabling the analysis of large-scale hypergraphs with high-order structures.

\end{document}